\newtheorem{theorem}{Theorem}[section]
\newtheorem{lemma}{Lemma}[section]
\newtheorem{remark}{Remark}[section]
\def\BibTeX{{\rm B\kern-.05em{\sc i\kern-.025em b}\kern-.08em
    T\kern-.1667em\lower.7ex\hbox{E}\kern-.125emX}}
\begin{document}

\title{Identification and Analysis of Cascading Failures in Power Grids with Protective Actions \\
\thanks{Corresponding author: Gaoxi Xiao}
}

\author{\IEEEauthorblockN{Chao Zhai}
\IEEEauthorblockA{\textit{School of Electrical and Electronic Engineering} \\
\textit{Nanyang Technological University} \\
Singapore \\
zhaichao@ntu.edu.sg}
\and
\IEEEauthorblockN{Gaoxi Xiao, and Hehong Zhang}
\IEEEauthorblockA{\textit{School of Electrical and Electronic Engineering} \\
\textit{Institute of Catastrophe Risk Management} \\
\textit{Nanyang Technological University} \\
Singapore \\
egxxiao@ntu.edu.sg}
}

\maketitle

\begin{abstract}
This paper aims to identify and analyze the initial contingencies or disturbances that could lead to the worst-case cascading failures of power grids. An optimal control approach is proposed to determine the most disruptive disturbances on the branch of power transmission system by regarding the disturbances as the control inputs. Moreover, protective actions such as load shedding and generation dispatch are taken into account in a convex optimization framework to prevent the cascading outages of power grids. In theory, the necessary conditions for identifying the most disruptive disturbances are obtained by solving an integrated system of algebraic equations. Finally, numerical simulations are carried out to validate the proposed approach on the IEEE RTS 24 Bus System.
\end{abstract}

\begin{IEEEkeywords}
cascading failure, optimal control, load shedding, generation dispatch, contingency
\end{IEEEkeywords}

\section{Introduction}
In the past decades, the world has suffered from several major blackouts such as US-Canada Blackout in 2003 \cite{us04}, European Blackout in 2006 \cite{eu06}, India Blackout in 2012 and Brazil Blackout in 1999 \cite{lu06}. All the major blackouts have caused huge economic losses and affected millions of people. Due to the complexity of electrical power systems, it has been a great challenge to understand, analyze and identify the cascading blackouts in practical power grids.

According to the analysis of technological reports, power system blackouts normally go through five stages: precondition, initiating event, cascade events, final state and restoration \cite{lu06}. The precondition usually happens in the winter or summer peak time due to the excessive power demand. The initiating event (e.g., short-circuit, overload, protection hidden failure, etc.) triggers the chain reaction of branch outages, which starts the cascade events. During the cascade event, power systems may take protective actions such as load shedding and generation dispatch to prevent the cascading failure. If protective actions fail to prevent the further cascades, power systems may end up with the final state of cascading failures, which tends to result in the blackout. As a result, the recovery strategy has to be taken in order to restore the normal state of power grids.  To avoid the occurrence of cascading blackouts, it would be desirable to identify the initial malicious disturbances or contingencies before cascading failures, so that the precautions can be taken in advance to eliminate the risk of power system blackouts.

Some identification approaches have been developed to search for the critical branches or initial malicious disturbances that can cause the large-scale disruptions \cite{chen05,dav11,don08,bie10,roc11,epp12,zhai17,zhai17iw}. For instance, some methods are proposed to identify the collections of $n-k$ contingencies based on the event trees \cite{chen05}, line outage distribution factor \cite{dav11} and other optimization techniques \cite{don08,bie10,roc11}. Nevertheless, these optimization approaches are not efficient to identify the large collections of $n-k$ contingencies that result in cascading blackouts. To address this problem, a ``random chemistry" algorithm is designed with the relatively low computational complexity \cite{epp12}. In addition, an optimal control approach is proposed to identify the initial contingencies by treating these contingencies as the control inputs \cite{zhai17,zhai17iw}. The above optimal control approach is able to identify the continuous changes of branch impedance other than direct branch outages as the initial contingencies. Moreover, it can be extended to identify the dangerous fluctuation of injected power on buses caused by the generation of renewable energies or load variations \cite{zhai18}. It is demonstrated that the optimal control approach can effectively determine the worst-case cascades of power grids without protective actions.

As is well known, protective actions such as load shedding and generation dispatch play a key role in preventing cascading blackouts of power grids. This work aims to extend the optimal control approach to identify the malicious contingencies by taking into account protective actions during the cascades. The main challenge in theory is to incorporate protective actions in the framework of optimal control theory, since protective actions are taken according to solutions of a different optimization problem (linear or nonlinear programming). By converting both the optimal control problem and the optimization problem for protective actions into an integrated system of algebraic equations, we manage to formulate and solve the problem of identifying initial contingencies that could lead to the worst-case cascading failures in power grids endowed with protective actions.

The remainder of this paper is organized as follows. Section \ref{sec:prob} formulates the contingency identification problem of power grids involving protective actions in the framework of optimal control theory. Section \ref{sec:the} presents the theoretical analysis of the optimal control problem. Section \ref{sec:sim} provides simulation results to validate the proposed approach. Finally, the conclusion is drawn in Section \ref{sec:con}.

\section{Problem Formulation}\label{sec:prob}
This section presents the problem formulation of identifying the most disruptive disturbances on transmission lines of power grids with $n$ branches and $m$ buses. Figure \ref{flow} presents an illustration of identifying the worst-case cascades of power grids with protective actions in the framework of optimal control theory. When the initial contingency or disturbance is added on the branch of power grids, the branch impendence is changed. This leads to the redistribution of power flow on the branches. If the power flow on the branch exceeds its threshold, the branch is severed with the change of network topology. The above branch outage could result in the overloads of other branches and give rise to the further cascades. At the given cascading step (e.g., the $l$-th cascading step in Fig. \ref{flow}), protective actions (e.g., load shedding and generation dispatch) are taken to prevent the cascading failure. The above cascade process can be described by the state equation in optimal control theory. By designing a cost function to quantify the final disruption level of cascading failures, an optimal control algorithm can be developed to obtain the optimal control inputs, which are exactly the initial contingencies or disturbances that can cause the worst-case cascading failures.

\begin{figure}
\scalebox{0.045}[0.045]{\includegraphics{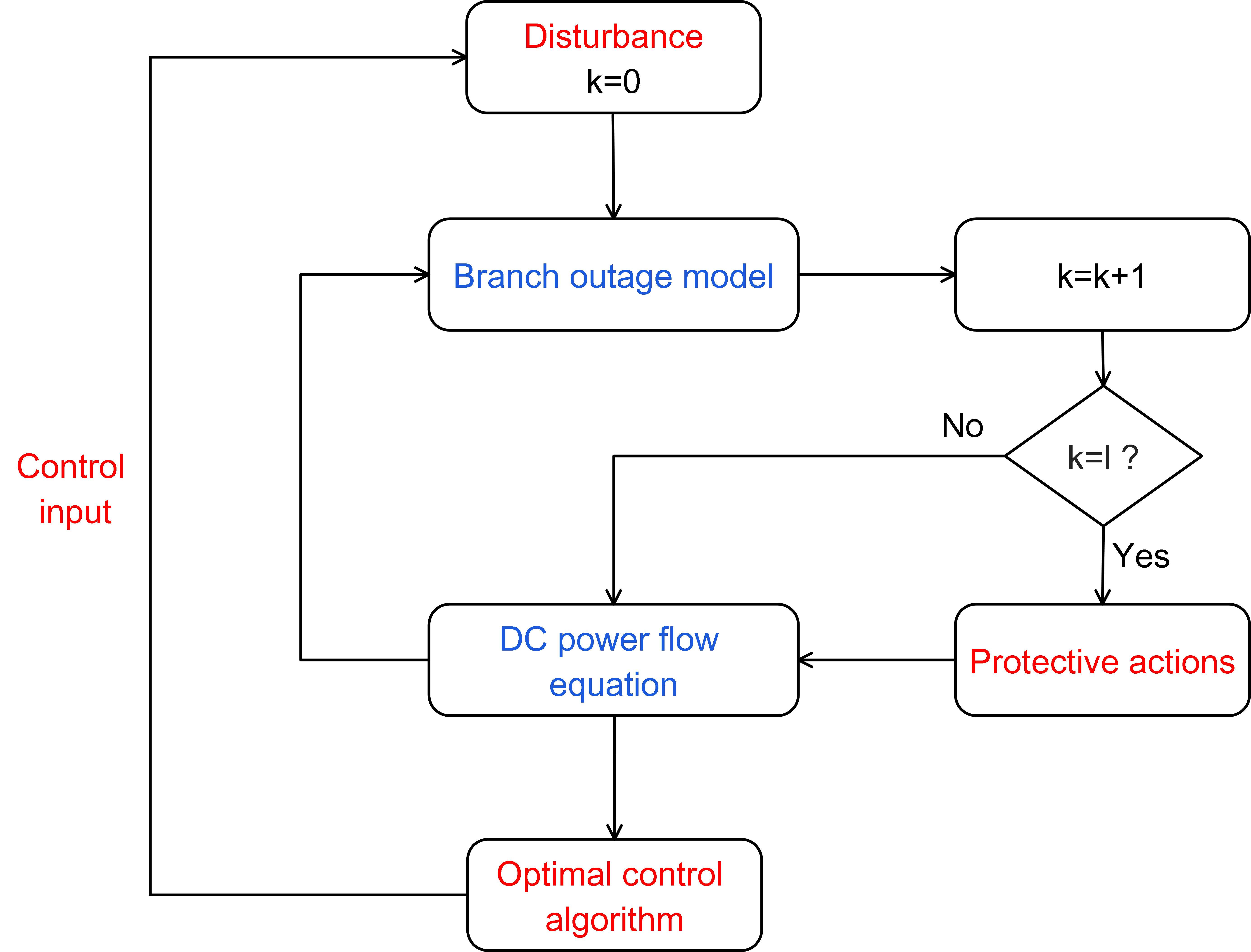}}\centering
\caption{\label{flow} Schematic diagram on the identification of power system cascades with protective actions.}
\end{figure}

By treating the branch impedance or admittance as the state variable, a state equation can be established to characterize the line outage sequence during the power system cascades. The power flow on each branch is obtained by solving the power flow equation. The protective actions during cascades are implemented at the given cascading step by adjusting the injected power on buses. Finally, an optimal control problem is formulated to search for the initiating event of cascading blackouts with the consideration of load shedding and generation dispatch.

\subsection{State equation}
The state equation is used to describe the line outage sequence of power grids during cascades. Suppose the status of branch or transmission line depends on both the power flow on the branch and its threshold. Thus, an approximate function on the branch status
is designed to characterize the operation of circuit breaker in protective relays as follows
$$
g(P_{ij},c_{ij})=\left\{
                \begin{array}{ll}
                  0, & \hbox{$|P_{ij}|\geq \sqrt{c_{ij}^2+\frac{\pi}{2\sigma}}$;} \\
                  1, & \hbox{$|P_{ij}|\leq\sqrt{c_{ij}^2-\frac{\pi}{2\sigma}}$;} \\
                  \frac{1-\sin\sigma (P_{ij}^2-c_{ij}^2)}{2}, & \hbox{otherwise.}
                \end{array}
              \right.
$$
where $\sigma$ is a tunable positive parameter. $P_{ij}$ refers to the power flow on the branch between Bus $i$ and Bus $j$, and $c_{ij}$ denotes its threshold of power flow. $P_{ij}$ can be obtained by solving the power flow equation. Notably, the function $g$ is differentiable with respect to $P_{ij}$, and it approximates to the step function as the parameter $\sigma$ increases. It can reflect the physical characteristic of protective relays while contributing to theoretical analysis on the identification of initial contingencies using optimal control theory.

The line outage sequence can be described by the following state equation (i.e., branch outage model in Figure \ref{flow})
\begin{equation}\label{state_eq}
Y_p^{k+1}=G(P_{ij}^k)\cdot Y_p^{k}+E_{i_k}u_k
\end{equation}
where $Y_p^k=(y^k_{p,1},y^k_{p,2},...,y^k_{p,n})^T\in R^n$ refers to the vector of branch admittance, and the control input or disturbance is denoted by $u_k=(u_{k,1},u_{k,2},...,u_{k,n})^T$. In addition, $G(P_{ij}^k)$ is a diagonal matrix
$$
G(P_{ij}^k)=\left(
              \begin{array}{cccc}
                g(P_{i_1j_1},c_{i_1j_1}) & 0 & . & 0 \\
                0 & g(P_{i_2j_2},c_{i_2j_2}) & . & 0 \\
                . & . & . & . \\
                0 & 0 & . & g(P_{i_nj_n},c_{i_nj_n}) \\
              \end{array}
            \right)
$$
and it describes the status of each branch at the $k$-th cascading step. The matrix $E_{i_k}$ is constructed as
$$
E_{i_k}=diag\left(e^T_{i_k}\right)=diag(\underbrace{0,..,0,1}_{i_k},0,...,0) \in R^{n \times n}
$$
and it enables us to select the $i_k$-th branch for adding the initial disturbances.

\subsection{DC power flow equation}
For high-voltage transmission systems, the DC power flow is a good substitute for the AC power flow \cite{yan15}, and it is computationally efficient and immune to the numerical non-convergence. Thus, the DC power flow equation is employed in this work. The classic DC power flow equation is given by
\begin{equation}\label{pfe}
P_i=\sum_{j=1}^{m}B_{ij}\theta_{ij}=\sum_{j=1}^{m}B_{ij}(\theta_i-\theta_j)
\end{equation}
where $P_i$ and $\theta_i$ refer to the injection power and voltage phase angle of Bus $i$, respectively. $B_{ij}$ represents the mutual susceptance between Bus $i$ and Bus $j$. The symbol $m$ denotes the number of buses in power networks. Equation (\ref{pfe}) can be rewritten in matrix form \cite{stot09}
$$
P=B\theta
$$
where $P=(P_1, P_2,...,P_m)$, $\theta=(\theta_1,\theta_2,...,\theta_m)$ and
$$
B=\left(
   \begin{array}{cccc}
     \sum_{i=2}^{m}B_{1i} & -B_{12} & . & B_{1m} \\
     -B_{21} & \sum_{i=1,i\neq2}^{m}B_{2i} & . & B_{2m} \\
     . & . & . & . \\
     -B_{m1} & -B_{m2} & . & \sum_{i=1}^{m-1}B_{mi} \\
   \end{array}
  \right)
$$
It is worth pointing out that $B$ can be obtained by removing the real part of $Y_b$. In fact, the nodal admittance matrix $Y^k_{b}$ at the $k$-th cascading step can be obtained as
$$
Y^k_{b}=A^Tdiag(Y^k_p)A
$$
where $A$ denotes the branch-bus incidence matrix \cite{stag68}. Thus, the power flow on the branch between Bus $i$ and Bus $j$ can be computed as
$$
P^k_{ij}=e_i^TY_b^ke_j(e_i-e_j)^T(Y_b^k)^{-1^*}P^k,~i,j\in I_{m}=\{1,2,...,m\}
$$
where $e_i$ is the $m$-dimensional unit vector with $1$ at the $i$-th position and $0$ elsewhere. $P^k$ refers to the vector of injected power on buses at the $k$-th cascading step. The symbol $-1^*$ represents a generalized inverse for solving DC power flow equation \cite{zhai17}.

\begin{remark}
During the cascading blackout, power network may be divided into several subnetworks ($i.e.$, islands), which can be identified by analyzing the nodal admittance matrix $Y^k_{b}$. To solve the DC power flow equation, the generator bus connected to the largest generating station is selected as the new slack bus in the subnetwork. And thus the power variation of slack bus accounts for a small percentage of its generating capacity. If there is no generator bus in the subnetwork, the power flow is zero on each branch of this subnetworks.
\end{remark}

\subsection{Protective actions}
If generation dispatch and load shedding are taken into account in the formulation, $P^k$ has to be updated at certain steps of cascading failure. For simplicity, suppose that load shedding and generation control are implemented at the $l$-th cascading step ($1<l<h$).
This implies that $P^k=P^l$ for $k\geq l$. Thus, a nonlinear programming problem can be proposed to allow for load shedding and generation dispatch as follows
\begin{equation}\label{orig}
\begin{split}
    &~~~~~~~~\min_{P^l}\|P^l-P^0\|^2 \\
    &s.~t.~~~~~\underline{P}_i\leq P^l_i \leq \bar{P}_i \\
    &~~~~~~-c_{ij} \leq P^l_{ij} \leq c_{ij} \\
\end{split}
\end{equation}
where $P^l=(P_1^l,P_2^l,...,P_m^l)^T$, and $P^0$ denotes the vector of original injected power on buses. The symbols $\underline{P}_i$ and $\bar{P}_i$ denote the upper and lower bounds of injected power on Bus $i$, respectively. The cost function in (\ref{orig}) quantifies the changes of injected power on buses due to load shedding and generation control. Essentially, the objective of Optimization Problem (\ref{orig}) is to achieve the minimum adjustment of injected power on buses while preventing the further branch outages of power grids.
\begin{remark}
The linear programming formulation can also be adopted to allow for protection actions in power systems \cite{car02}. The proposed approach is also applied to the linear programming formulation. In addition, it also has a chance to be extended for  dealing with protection actions at multiple cascading steps.
\end{remark}

\subsection{Cost function}
Next, a cost function of optimal control problem is presented to quantify the disruption level of power grids at the end of cascading failures. Suppose the cascades come to an end at the $h$-th cascading step, and the cost function is designed as
\begin{equation}\label{cost}
\min_{u_k}J(Y_p^h,u_k)
\end{equation}
with
$$
J(Y_p^h,u_k)=\Gamma(Y_p^h)+\epsilon\sum_{k=0}^{h-1}\frac{\|u_k\|^2}{\max\{0,1-k\}}
$$
and the adjustment of injected power on buses $P^k$ for the system protection is implemented according to the solutions to Optimization Problem (\ref{orig}) at the $l$-th cascading step. The first term in the cost function (i.e., $\Gamma(Y_p^h)$) describes the final status of cascading failures (e.g., network connectivity or power flow), and the second term characterizes the accumulated control energy. In addition, $\epsilon$ is a positive weight, and the symbol $\|\cdot\|$ represents the $2$-norm. Normally, $\epsilon$ is small enough so that more efforts are taken to minimize the first term in the cost function.
\begin{remark}
The second term in the cost function ensures that only initial disturbances can be added on the branch of power transmission system. This is due to $\max\{0,1-k\}=1$ for $k=0$ and $\max\{0,1-k\}=0$ for $k\geq1$. When $\max\{0,1-k\}=0$, the second term goes to the infinity. This implies that it is impossible to minimize the cost function when $k\geq1$.
\end{remark}

\section{Theoretical Results}\label{sec:the}
This section provides theoretical results on the identification of initial contingencies on branches that can cause the catastrophic cascading failures of power grids. First of all, an equivalent condition is presented for Problem (\ref{orig}).
\begin{lemma}\label{lem1}
The optimal solutions of Optimization Problem (\ref{orig}) are equivalent to the solutions of the following system of algebraic equations
\begin{equation}\label{kkt}
\begin{split}
    &2(P^l-P^0)+\bar{\mu}-\underline{\mu}+\sum_{(i,j)\in\Omega} (\bar{\lambda}_{ij}-\underline{\lambda}_{ij})e_i^TY_b^le_j(Y_b^l)^{-1^*}e_{ij}=\bf{0} \\
    &P^l_i-\bar{P}_i+\bar{x}^2_i=0, ~~~\quad \bar{\mu}_i(P^l_i-\bar{P}_i)=0,~~\quad \bar{\mu}_i-\bar{z}^2_i=0 \\
    &\underline{P}_i-P^l_i+\underline{x}^2_i=0,~~~\quad \underline{\mu}_i(P^l_i-\underline{P}_i)=0,~\quad \underline{\mu}_i-\underline{z}^2_i=0 \\
    &P^l_{ij}-c_{ij}+\bar{y}^2_{ij}=0, \quad (P^l_{ij}-c_{ij})\bar{\lambda}_{ij}=0, \quad \bar{\lambda}_{ij}-\bar{w}^2_{ij}=0\\
    &P^l_{ij}+c_{ij}-\underline{y}^2_{ij}=0, \quad (P^l_{ij}+c_{ij})\underline{\lambda}_{ij}=0, \quad \underline{\lambda}_{ij}-\underline{w}^2_{ij}=0 \\
\end{split}
\end{equation}
where $Y^l_{b}=A^Tdiag(Y^l_p)A$, $e_{ij}=e_i-e_j$, $i\in I_m$, $(i,j)\in\Omega$, $\bar{\mu}=(\bar{\mu}_1,\bar{\mu}_2,...,\bar{\mu}_m)^T$ and $\underline{\mu}=(\underline{\mu}_1,\underline{\mu}_2,...,\underline{\mu}_m)^T$. And
the symbol $\Omega$ denotes the set of branches in power systems with the cardinality $|\Omega|=n$ (i.e., the number of branches).
\end{lemma}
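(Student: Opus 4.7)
The plan is to recognize Problem (\ref{orig}) as a convex quadratic program with linear constraints and then apply the Karush--Kuhn--Tucker conditions, which in this setting are both necessary and sufficient. The objective $\|P^l-P^0\|^2$ is strictly convex in $P^l$, and all constraints are affine in $P^l$: the bus-power bounds directly so, and the branch-flow bounds because $P^l_{ij} = e_i^T Y_b^l e_j (e_i-e_j)^T (Y_b^l)^{-1^*} P^l$ is a linear functional of $P^l$ once $Y_b^l$ is regarded as fixed during the optimization at the $l$-th cascading step. Therefore the linear constraint qualification is automatic and the equivalence in the lemma reduces to (i) writing the KKT system correctly and (ii) replacing the inequalities and sign conditions on the multipliers by algebraic equalities through slack variables.

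First I would form the Lagrangian
\[
L = \|P^l-P^0\|^2 + \sum_{i=1}^m \bar{\mu}_i(P^l_i-\bar{P}_i) + \sum_{i=1}^m \underline{\mu}_i(\underline{P}_i-P^l_i) + \sum_{(i,j)\in\Omega}\bar{\lambda}_{ij}(P^l_{ij}-c_{ij}) + \sum_{(i,j)\in\Omega}\underline{\lambda}_{ij}(-P^l_{ij}-c_{ij})
\]
with all multipliers non-negative, and then impose stationarity $\nabla_{P^l}L=\mathbf{0}$. The gradient of the quadratic objective is $2(P^l-P^0)$, the bus-bound multipliers contribute $\bar{\mu}-\underline{\mu}$, and for each branch I would compute $\nabla_{P^l} P^l_{ij}$ by treating the scalar $e_i^T Y_b^l e_j$ as a constant factor, yielding $e_i^T Y_b^l e_j (Y_b^l)^{-1^*} e_{ij}$. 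Summing the branch contributions $(\bar{\lambda}_{ij}-\underline{\lambda}_{ij})\nabla_{P^l}P^l_{ij}$ over $(i,j)\in\Omega$ reproduces exactly the first line of (\ref{kkt}).

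Next I would encode the remaining KKT ingredients by the squared-slack device. Each primal inequality such as $P^l_i\le\bar{P}_i$ is rewritten as $P^l_i-\bar{P}_i+\bar{x}_i^2=0$ for a real variable $\bar{x}_i$, and each dual feasibility condition such as $\bar{\mu}_i\ge 0$ is rewritten as $\bar{\mu}_i-\bar{z}_i^2=0$ for a real variable $\bar{z}_i$, while the complementarity $\bar{\mu}_i(P^l_i-\bar{P}_i)=0$ is kept verbatim. Repeating this construction for the lower bus bound and for the upper and lower branch-flow bounds produces precisely the last four rows of (\ref{kkt}). The equivalence then follows from the standard two-way argument for convex programs: every optimum of (\ref{orig}) is a KKT point because the constraints are affine, and every solution of (\ref{kkt}) is a KKT point of a convex problem and hence a global minimizer.

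The main obstacle I anticipate is handling the gradient of $P^l_{ij}$ carefully when islanding occurs. The symbol $(Y_b^l)^{-1^*}$ denotes a generalized inverse for a possibly singular DC flow operator, and I would need to verify that this generalized inverse depends only on $Y_b^l$, which is frozen during the optimization, so that $P^l_{ij}$ remains a genuinely linear functional of $P^l$ and the KKT calculation goes through unchanged. Once that linearity is established, the rest is bookkeeping: matching signs between upper/lower multipliers, collecting the branch gradients into the sum over $\Omega$, and confirming that every slack-variable substitution is reversible so that the algebraic system (\ref{kkt}) characterizes exactly the optimal set of (\ref{orig}).
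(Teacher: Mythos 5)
Your proposal is correct and follows essentially the same route as the paper: apply the KKT conditions to the convex quadratic program (\ref{orig}), encode primal and dual feasibility via squared slack variables, and invoke convexity of the objective together with affineness of the constraints to upgrade necessity to sufficiency. The only cosmetic difference is that the paper first rewrites the inequalities as equalities in (\ref{equal}) before invoking KKT, whereas you apply KKT to the inequality form and introduce the slacks afterwards; your explicit remarks on the linear constraint qualification and on the linearity of $P^l_{ij}$ in $P^l$ under the generalized inverse are points the paper leaves implicit.
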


\begin{proof}
The inequality constraints in the optimization problem (\ref{orig}) can be converted into the equality constraints by introducing the unknown variables $\bar{x}_i$, $\underline{x}_i$, $\bar{y}_{ij}$ and $\underline{y}_{ij}$ as follows.
\begin{equation}\label{equal}
\begin{split}
    &~~~~~~~~\min_{P^l}\|P^l-P^0\|^2 \\
    &s.~t. ~~~~P^l_i-\bar{P}_i+\bar{x}^2_i=0 \\
    &~~~~~~~~~\underline{P}_i-P^l_i+\underline{x}^2_i=0 \\
    &~~~~~~~~~P^l_{ij}-c_{ij}+\bar{y}^2_{ij}=0 \\
    &~~~~~~~~~P^l_{ij}+c_{ij}-\underline{y}^2_{ij}=0
\end{split}
\end{equation}
According to the  Karush-Kuhn-Tucker (KKT) conditions, the necessary condition for the solutions to Optimization Problem (\ref{equal}) can be obtained as follows. Specifically, we have
$$
\nabla\|P^l-P^0\|^2+\bar{\mu}-\underline{\mu}+\sum_{(i,j)\in\Omega}(\bar{\lambda}_{ij}-\underline{\lambda}_{ij})e_i^TY_b^le_j(Y_b^l)^{-1^*}e_{ij}=\bf{0} $$
with $\nabla\|P^l-P^0\|^2=2(P^l-P^0)$ for the stationarity condition and the constraints
\begin{equation*}
\begin{split}
    &P^l_i-\bar{P}_i+\bar{x}^2_i=0, \quad  P^l_{ij}-c_{ij}+\bar{y}^2_{ij}=0 \\
    &\underline{P}_i-P^l_i+\underline{x}^2_i=0, \quad P^l_{ij}+c_{ij}-\underline{y}^2_{ij}=0
\end{split}
\end{equation*}
The conditions for complementary slackness are given by
\begin{equation*}
\begin{split}
    &\bar{\mu}_i(P^l_i-\bar{P}_i)=0, \quad (P^l_{ij}-c_{ij})\bar{\lambda}_{ij}=0 \\
    &\underline{\mu}_i(P^l_i-\underline{P}_i)=0, \quad (P^l_{ij}+c_{ij})\underline{\lambda}_{ij}=0
\end{split}
\end{equation*}
Moreover, the dual feasibility can be described by
\begin{equation*}
\begin{split}
    &\bar{\mu}_i-\bar{z}^2_i=0, \quad \bar{\lambda}_{ij}-\bar{w}^2_{ij}=0 \\
    &\underline{\mu}_i-\underline{z}^2_i=0 \quad \underline{\lambda}_{ij}-\underline{w}^2_{ij}=0
\end{split}
\end{equation*}
where $e_{ij}=e_i-e_j$, $i\in I_m$, $(i,j)\in\Omega$, $\bar{\mu}=(\bar{\mu}_1,\bar{\mu}_2,...,\bar{\mu}_m)^T$ and $\underline{\mu}=(\underline{\mu}_1,\underline{\mu}_2,...,\underline{\mu}_m)^T$.
Since the cost function in Optimization Problem (\ref{orig}) is a convex function and the inequality constraints are affine, the above necessary conditions are also sufficient for optimality. This implies that the solutions to Optimization Problem (\ref{orig}) are equivalent to solutions to the system of algebraic equations (\ref{equal}). The proof of this lemma is thus completed.
\end{proof}

\begin{remark}
As we can see, System (\ref{kkt}) is composed of $(7m+6n)$ equations and $(7m+6n)$ additional unknown variables (i.e., $P^l$, $\underline{w}_{ij}$, $\bar{w}_{ij}$, $\underline{\lambda}_{ij}$, $\bar{\lambda}_{ij}$, $\underline{y}_{ij}$, $\bar{y}_{ij}$, $\underline{x}_{i}$, $\bar{x}_{i}$, $\underline{z}_{i}$, $\bar{z}_{i}$, $\underline{\mu}_{i}$, $\bar{\mu}_{i}$). Note that $Y^l_b$ contains the existing unknown variables $Y_p^l$ in the state equation (\ref{state_eq}).
\end{remark}

The equivalent conditions in Lemma \ref{lem1} allows to obtain the necessary conditions for the optimal control problem (\ref{cost}) as follows

\begin{theorem}
The necessary conditions for the optimal control problem (\ref{cost}) with protective actions according to (\ref{orig}) are given by solving the system of algebraic equations as follows
\begin{equation}\label{con_sys}
\left\{
  \begin{array}{ll}
    Y_p^{k+1}-G(P^k_{ij})Y_p^k-\frac{\max\{0,1-k\}}{2\epsilon}E_{i_k}\prod_{s=0}^{h-k-2}\frac{\partial Y_p^{h-s}}{\partial Y_p^{h-s-1}}\mathbf{1}_n=\mathbf{0} \\
    \mathcal{F}(Y_p^l, P^l, \underline{w}_{ij}, \bar{w}_{ij}, \underline{\lambda}_{ij}, \bar{\lambda}_{ij}, \underline{y}_{ij}, \bar{y}_{ij}, \underline{x}_{i}, \bar{x}_{i}, \underline{z}_{i}, \bar{z}_{i}, \underline{\mu}_{i}, \bar{\mu}_{i})=\mathbf{0}
  \end{array}
\right.
\end{equation}
where the second equation represents System (\ref{kkt}), and the optimal adjustment of injected power on buses for protective actions satisfies $P^k=P^l$ for $k\geq l$, and $P^l$ is the solution to Problem (\ref{orig}). In addition, the optimal control input is given by
\begin{equation}\label{con_input}
u_k=\frac{\max\{0,1-k\}}{2\epsilon}E_{i_k}\prod_{s=0}^{h-k-2}\frac{\partial Y_p^{h-s}}{\partial Y_p^{h-s-1}}\mathbf{1}_n, \quad k\in I_{h-1}
\end{equation}
\end{theorem}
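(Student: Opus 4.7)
The plan is to apply a discrete-time Lagrangian/adjoint argument to (\ref{cost}), using Lemma \ref{lem1} to replace the embedded nonlinear programme (\ref{orig}) by the algebraic KKT system (\ref{kkt}). Once the protective action at step $l$ is encoded purely through equality constraints, the whole problem reduces to a minimization of $J$ over $\{u_k\}$, $\{Y_p^k\}$, $P^l$, and the KKT auxiliary variables subject only to equality constraints, so the first-order necessary conditions follow from stationarity of a single augmented Lagrangian.

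Concretely, I would introduce costate vectors $\lambda_{k+1}\in R^n$ for the state equation (\ref{state_eq}) at each step $k=0,\ldots,h-1$, together with Lagrange multipliers for every row of (\ref{kkt}), and form
$$\mathcal{L}=\Gamma(Y_p^h)+\epsilon\sum_{k=0}^{h-1}\frac{\|u_k\|^2}{\max\{0,1-k\}}+\sum_{k=0}^{h-1}\lambda_{k+1}^T\bigl(Y_p^{k+1}-G(P_{ij}^k)Y_p^k-E_{i_k}u_k\bigr)+\mathcal{L}_{\mathrm{KKT}}.$$
Setting $\partial\mathcal{L}/\partial u_k=\mathbf{0}$ yields $u_k=\tfrac{\max\{0,1-k\}}{2\epsilon}E_{i_k}\lambda_{k+1}$, which, because $\max\{0,1-k\}=0$ for $k\ge 1$, automatically enforces $u_k=\mathbf{0}$ off the initial step. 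Setting $\partial\mathcal{L}/\partial Y_p^k=\mathbf{0}$ for $k=1,\ldots,h-1$ gives the backward adjoint recursion $\lambda_k=\bigl(\partial Y_p^{k+1}/\partial Y_p^k\bigr)^T\lambda_{k+1}$ with terminal data $\lambda_h=\nabla\Gamma(Y_p^h)$; iterating this recursion produces exactly the product of Jacobians appearing in (\ref{con_input}). Substituting the resulting $u_k$ back into (\ref{state_eq}) delivers the first row of (\ref{con_sys}), while the block $\mathcal{L}_{\mathrm{KKT}}$ delivers the second row via Lemma \ref{lem1}.

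The hard part will be correctly propagating sensitivities through the step $l$ at which the protective action activates. For $k\ge l$ the injected power vector $P^k=P^l$ inside $G(P_{ij}^k)$ is no longer a free parameter but is determined implicitly by $Y_p^l$ through (\ref{kkt}), so the Jacobian $\partial Y_p^{k+1}/\partial Y_p^k$ at $k=l$ acquires an additional term coming from $\partial P^l/\partial Y_p^l$, which must itself be obtained by implicit differentiation of the KKT system. Careful bookkeeping is also needed to identify the $\mathbf{1}_n$ in (\ref{con_input}) with $\nabla\Gamma(Y_p^h)$ for the terminal cost used here. The remaining ingredients, namely smoothness of $G$ in $P_{ij}^k$ by construction, convexity of (\ref{orig}) which makes the conditions in Lemma \ref{lem1} both necessary and sufficient at step $l$, and the observation that the singular terms in $J$ are suppressed by the constraint $u_k=\mathbf{0}$ for $k\ge 1$, are all routine.
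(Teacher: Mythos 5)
Your proposal is correct and follows essentially the same route as the paper: the paper invokes the discrete-time Pontryagin maximum principle (stationarity in $u_k$, the backward costate recursion $\lambda_k=(\partial Y_p^{k+1}/\partial Y_p^k)^T\lambda_{k+1}$ with $\lambda_h=\nabla\Gamma(Y_p^h)$, and transversality), which is exactly the augmented-Lagrangian stationarity argument you write out, and it likewise appends the KKT system (\ref{kkt}) via Lemma \ref{lem1} to form the integrated system (\ref{con_sys}). The two subtleties you flag --- the implicit dependence of $P^l$ on $Y_p^l$ through the KKT system, which should add a term to the Jacobian at step $l$, and the identification of $\mathbf{1}_n$ with the terminal gradient $\nabla\Gamma(Y_p^h)$ --- are real and are passed over silently in the paper's proof, so your treatment is if anything more careful on those points.
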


\begin{proof}
With Pontryagin's maximum principle in optimal control theory for the discrete-time system \cite{fra95}, the necessary conditions for the optimal control problem (\ref{cost}) can be determined as
\begin{equation*}
Y_p^{k+1}=G(P_{ij}^k)\cdot Y_p^{k}+E_{i_k}u_k
\end{equation*}

\begin{equation*}
\left(\frac{\partial Y_p^{k+1}}{\partial u_k}\right)^T\lambda_{k+1}+\frac{\epsilon }{\max\{0,1-k\}}\cdot\frac{\partial\|u_k\|^2}{\partial u_k}=0
\end{equation*}

\begin{equation*}
\lambda_k=\left(\frac{\partial Y_p^{k+1}}{\partial Y_p^k}\right)^T\lambda_{k+1}+\frac{\epsilon }{\max\{0,1-k\}}\cdot\frac{\partial\|u_k\|^2}{\partial Y_p^k}
\end{equation*}

\begin{equation*}
\frac{\partial \mathrm{T}(Y_p^h)}{\partial Y_p^h}-\lambda_h=\mathbf{0}
\end{equation*}
where $\mathbf{0}=(0,0,...,0)^T \in R^n$. By reorganizing the above equations,
we can obtain the optimal control input
$$
u_k=\frac{\max\{0,1-k\}}{2\epsilon}E_{i_k}\prod_{s=0}^{h-k-2}\frac{\partial Y_p^{h-s}}{\partial Y_p^{h-s-1}}\mathbf{1}_n, \quad k\in I_{h-1}
$$
and the system of algebraic equations
\begin{equation}\label{s1}
Y_p^{k+1}-G(P^k_{ij})Y_p^k-\frac{\max\{0,1-k\}}{2\epsilon}E_{i_k}\prod_{s=0}^{h-k-2}\frac{\partial Y_p^{h-s}}{\partial Y_p^{h-s-1}}\mathbf{1}_n=\mathbf{0}
\end{equation}
where the vector $P^l$ is determined by the solutions to System (\ref{kkt}), which can be rewritten as
\begin{equation}\label{s2}
\mathcal{F}(Y_p^l, P^l, \underline{w}_{ij}, \bar{w}_{ij}, \underline{\lambda}_{ij}, \bar{\lambda}_{ij}, \underline{y}_{ij}, \bar{y}_{ij}, \underline{x}_{i}, \bar{x}_{i}, \underline{z}_{i}, \bar{z}_{i}, \underline{\mu}_{i}, \bar{\mu}_{i})=\mathbf{0}
\end{equation}
for ease of notation according to Lemma \ref{lem1}. By combining Equations (\ref{s1}) and (\ref{s2}), an extended system of algebraic equations is obtained with $7m+(6+h)n$ equations and $7m+(6+h)n$ unknown variables. By substituting the solutions of this extended system into (\ref{con_input}), we can identify the most disruptive disturbances for the cascades of power grids with protective actions at the given cascading step.  This completes the proof
\end{proof}

\begin{remark}
The extended system of algebraic equations (\ref{con_sys}) can be solved using the numerical solver in numerical-analysis softwares. In this way, load shedding and generation dispatch can be taken into account in the proposed optimal control formulation of identifying the worst-case cascading failures.
\end{remark}

\begin{remark}
Since the solutions to System (\ref{con_sys}) can only provide necessary conditions for the optimal control problem (\ref{cost}), extensive numerical simulations have to be conducted to search for the feasible solutions in practice. By comparing the values of cost function with the identified initial disturbances (i.e., control inputs), it is expected to obtain the optimal or suboptimal solutions
to Problem (\ref{cost}) if numerical simulations are repeated for a sufficiently large number of times.
\end{remark}

\section{Simulation and Validation}\label{sec:sim}
To demonstrate the effectiveness of the proposed identification approach based on optimal control, numerical simulations are conducted on the IEEE RTS 24 Bus System to determine the initial disruptive disturbances on each branch (see Fig.~\ref{b24}). In addition, we make a comparison on simulation results between the cases with and without protective actions (i.e., load shedding and generation dispatch), respectively.

\subsection{Parameter setting}
\begin{figure}
\scalebox{0.45}[0.45]{\includegraphics{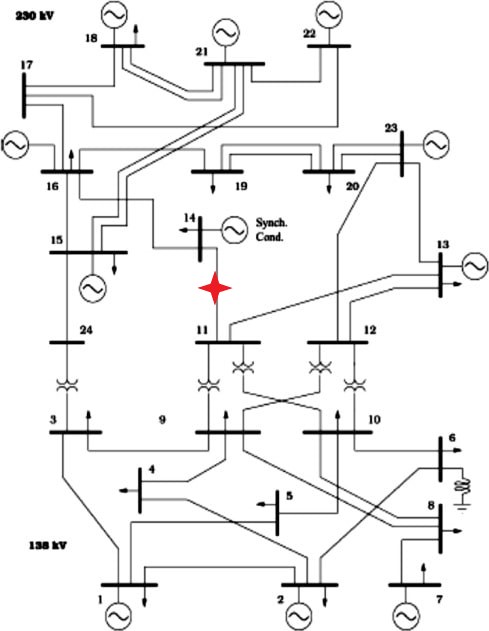}}\centering
\caption{\label{b24} IEEE RTS 24 bus system with initial disturbances on branches (red star)}
\end{figure}

Per unit values are adopted with the base value of power $100$ MVA. Moreover, the solver ``fsolve" in Matlab is employed to solve the integrated system of algebraic equations (\ref{con_sys}). Other parameters are given as follows: $\sigma=5\times 10^4$, $\epsilon=10^{-4}$, $h=10$, and $\Gamma(Y_p^h)=\|Y_p^h\|^2/2$ in the cost function (\ref{cost}). For each branch, numerical simulations are carried out to solve System (\ref{con_sys}) for $10$ times, and the worst-case disturbances are selected by comparing the values of cost function. A performance index $\gamma$ is defined to quantify the disruption level of cascades as follows
$$
\gamma=\frac{J(Y_p^h,u)}{J(Y_p^h,0)}.
$$
Intuitively, the index $\gamma$ is the ratio between the final cost of power grids with the control input $u$ and that without any control inputs, and a smaller $\gamma$ indicates a worse cascade of power systems. The power flow threshold on each branch is $10\%$ larger than the normal power flow on the corresponding branch of power systems without any disturbances.

\subsection{Validation and comparison}
In the simulations, generation dispatch and load shedding are implemented at the 4-th cascading step~(i.e., $l=4$) according to solutions of Optimization Problem (\ref{orig}). Figure \ref{comp} presents the initial disturbances (i.e., control inputs) on each branch identified by the proposed optimal control approach and the resulting normalized costs (i.e., the index $\gamma$). Specifically, the blue bars denote the control inputs and normalized costs without generation dispatch and load shedding, while the green bars represent those with generation dispatch and load shedding. As we can observe in the upper panel of Figure~\ref{comp}, the height of green bar is not smaller than that of blue bar for each branch. This indicates that the larger initial disturbances~(i.e., the magnitude of control input)~are required to trigger the worst-case cascades of power grids with generation dispatch and load shedding compared to those without generation dispatch and load shedding. This demonstrates that protection actions are able to effectively enhance the robustness of power systems and relieve the final disruption level after malicious disturbances. It is worth pointing out that the branches with equal heights of blue bar and green bar (e.g., Branch 1, Branch 2, Branch 3, Branch 4, Branch 5, Branch 6, etc) are directly severed by the initial disturbances (i.e., control inputs). The lower panel of Figure~\ref{comp} demonstrates that the cascades with generation dispatch and load shedding are less disruptive on the whole except for the cascades triggered by the initial disturbances on Branch 28, Branch 32 and Branch 33. This is because the optimal solutions for generation dispatch and load shedding (i.e., solutions to the optimization problem (\ref{orig})) are not obtained by the numeric solver at the specified cascading step. Thus, the adjustment of injected power on buses (i.e., generation dispatch and load shedding) actually deteriorates branch overloads and results in the worse disruptions of cascades in the end.
\begin{figure}
\scalebox{0.68}[0.68]{\includegraphics{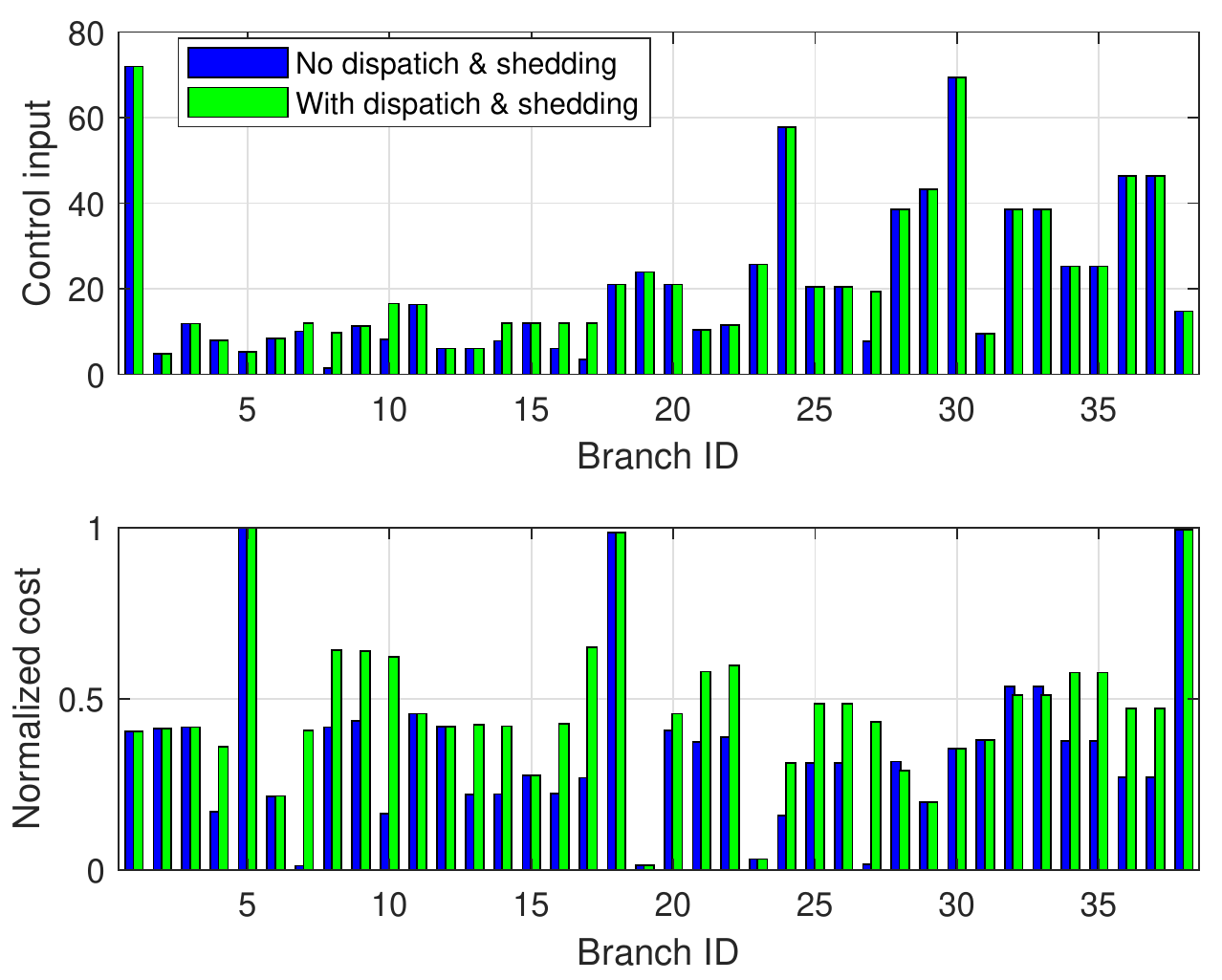}}\centering
\caption{\label{comp} Control inputs and normalized costs of the cascades with and without generation dispatch and load shedding on the IEEE RTS 24 Bus System.}
\end{figure}

\section{Conclusions}\label{sec:con}
In this paper, we investigated the problem of identifying the initial contingencies that result in the worst-case cascading failures of power grids. In particular, power grids are equipped with protection devices to prevent the cascading blackouts by load shedding or generation dispatch in time.  Moreover, a theoretical framework was proposed to allow for both the identification of the most disruptive disturbances and the optimal adjustment of injected power buses for the protection of power girds. Numerical simulations were conducted to better understand the effect of protection actions on the identification of initial malicious contingencies and the final disruptions of cascading failures. In this work, the deterministic cascading failure paths are taken into account for the identification of initial disruptive disturbances and the implementation of protective actions. In practice, power grids are subject to uncertainties (e.g., hidden failure, device aging, human errors, etc). Therefore, we will consider the contingency identification of power grids with uncertain cascading failure paths in the next step.

\section*{Acknowledgment}
This work is partially supported by the Future Resilience System Project at the Singapore-ETH Centre (SEC), which is funded by the National Research Foundation of Singapore (NRF) under its Campus for Research Excellence and Technological Enterprise (CREATE) program. It is also supported by Ministry of Education of Singapore under Contract MOE2016-T2-1-119.

\end{document}